\newtheorem{theorem}{Theorem}[section]
\newtheorem{lemma}[theorem]{Lemma}
\newtheorem{definition}[theorem]{Definition}
\newtheorem{fact}[theorem]{Fact}
\newcommand{\suchthat}{\;\ifnum\currentgrouptype=16 \middle\fi|\;}
\newcommand{\F}{\mathbb{F}}
\newcommand{\R}{\mathbb{R}}
\newcommand{\E}{{\rm I\kern-.3em E}}
\begin{document}
\title{A short note on the joint entropy of $n/2$-wise independence}
\author{
Amey Bhangale\footnote{Department of Computer Science, Rutgers University. Research supported in part by NSF grants CCF-1253886 and CCF-1540634.}  \thanks{{\tt amey.bhangale@rutgers.edu} }\and
Aditya Potukuchi\footnotemark[1] \thanks{ {\tt aditya.potukuchi@cs.rutgers.edu}.}
}

\maketitle

\begin{abstract} 
In this note, we prove a tight lower bound on the joint entropy of $n$ unbiased Bernoulli random variables which are $n/2$-wise independent. 

For general $k$-wise independence, we give new lower bounds by adapting Navon and Samorodnitsky's Fourier proof of the `LP bound' on error correcting codes.

This counts as partial progress on a problem asked by Gavinsky and Pudl{\'a}k in \cite{GP}.
\end{abstract}


\section{Introduction}

In this note, we study the Shannon entropy of unbiased Bernoulli random variables that are $k$-wise independent. The Shannon entropy (or simply, entropy) of a discrete random variable $X$, taking values in a set $Y$, is given by $\operatorname{H}(X) = -\sum_{y \in Y}\Pr(X = y)\log(\Pr(X = y))$, where all logarithms are base $2$. A joint distribution on $n$ unbiased, Bernoulli random variables $X = (X_1, \ldots, X_n)$ is said to be $k$-wise independent if for any set $S \subset [n]$ with $|S| \leq k$, and any string $a \in \{0,1\}^k$, we have that $\Pr(X|_S = a) = \frac{1}{2^{|S|}}$, where $X|_S$ means $X$ restricted to the coordinates in $S$.

Bounded independence distributions spaces come up very naturally in the study of error correcting codes. Let $\mathcal{C}$ be a binary linear code over $\F_2$ of dimension $k$, distance $d$, and length $n$, i.e., $\mathcal{C}$ is (also) a linear subspace of $\F_2^n$ of dimension $k$. Let $M$ be the $(n-k) \times n$ \emph{parity check matrix} for $\mathcal{C}$ (i.e., $\mathcal{C} = \operatorname{nullspace}(M)$). It can be checked that every $d-1$ columns of $M$ are linearly independent. So, the random variable $y^TM$, where $y$ is uniformly distributed in $\F_2^{n-k}$, s $(d-1)$-wise independent. This connection can be used to construct $k$-wise independent sample spaces of small support. For $k = O(1)$, \emph{BCH codes} give $k$-wise independent sample spaces of support size $O(n^{\frac{k}{2}})$. And for $k = n/2$, using the \emph{Hadamard code}, one gets a sample space of support size $\leq \lceil \frac{2^n}{n+1} \rceil$. It can be shown that these sample spaces are optimal in support size.

The study of entropy of joint distributions with bounded dependence was first studied by Babai in~\cite{Bab}. In~\cite{GP}, Gavinsky and Pudl{\'a}k prove asymptotically tight lower bounds on the joint entropy of $k$-wise independent (not necessarily Bernoulli) random variables for small values of $k$. They prove that such a distribution must have entropy at least $\log {n \choose k/2}$. This implies the previously stated lower bound on the size of the support, as it is more general (since $\operatorname{H}(X) \leq \log |\operatorname{supp}(X)|$). Here, we study the case when $k = \Theta(n)$ and in particular, we show asymptotically tight bounds when $k = n/2 - o(n)$.  We state the results.

\begin{theorem}
\label{thm:main}
Let $X$ be a joint distribution on unbiased Bernoulli random variables $(X_1, X_2, \ldots, X_n)$ which is $k-1$-wise independent, then 
\[
\operatorname{H}(X) \geq n - n\operatorname{H}\left(\frac{1}{2} - \sqrt{\frac{k}{n}\left(1-\frac{k}{n}\right)} \right) - o(n).
\]
\end{theorem}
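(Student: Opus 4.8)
The plan is to reduce the statement to an upper bound on the collision probability $\|p\|_2^2 = \sum_x p(x)^2$, where $p$ is the probability mass function of $X$, and then to extract that bound from (the Fourier form of) the first linear programming bound. The reduction is immediate: the Rényi-$2$ entropy lower-bounds the Shannon entropy, so $\operatorname{H}(X) \ge -\log\|p\|_2^2$, and it suffices to show $\|p\|_2^2 \le 2^{-n + n\operatorname{H}(1/2 - \sqrt{(k/n)(1-k/n)}) + o(n)}$. Write $\tilde p(S) = \E_{x\sim p}[\chi_S(x)]$ with $\chi_S(x) = (-1)^{\sum_{i\in S}x_i}$; unbiasedness together with $(k-1)$-wise independence gives $\tilde p(\emptyset)=1$ and $\tilde p(S)=0$ for $1 \le |S| \le k-1$. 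Now introduce the auxiliary distribution $q$ of $x\oplus y$ for independent $x,y\sim p$. Then $\tilde q(S) = \tilde p(S)^2$, so $q$ is again $(k-1)$-wise independent, but now its Fourier coefficients are \emph{nonnegative at every level}; moreover $q(0) = \Pr[x=y] = \|p\|_2^2$. So it is enough to upper bound $q(0)$.

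The second step is a dual linear programming inequality. Suppose $h:\{0,1\}^n\to\R$ satisfies $h\ge 0$ pointwise, $h(0)>0$, and $\hat h(S)\le 0$ for all $|S|\ge k$, where $\hat h(S) = 2^{-n}\sum_x h(x)\chi_S(x)$. Expanding $\E_{z\sim q}[h(z)]$ two ways: on one side $\E_{z\sim q}[h(z)] = \sum_S \hat h(S)\tilde q(S) = \hat h(\emptyset) + \sum_{|S|\ge k}\hat h(S)\tilde q(S) \le \hat h(\emptyset)$, using $\tilde q(S)\ge 0$, $\tilde q(S)=0$ for $1\le|S|\le k-1$, and the sign of $\hat h$; on the other side $\E_{z\sim q}[h(z)] = \sum_z q(z)h(z) \ge q(0)h(0)$, since every term is nonnegative. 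Hence $q(0) \le \hat h(\emptyset)/h(0)$.

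It remains to supply a good $h$, and this is precisely where Navon and Samorodnitsky's Fourier proof of the LP (MRRW) bound enters. Their argument produces a symmetric function $\Lambda$ on $\{0,1\}^n$, expressible as a nonnegative combination $\Lambda(x) = \sum_{j=0}^n \Lambda_j K_j(\mathsf{wt}(x))$ of Krawtchouk polynomials with $\Lambda_j\ge 0$ for $j\ge 1$ and $\Lambda_0>0$, such that $\Lambda(i)\le 0$ for $k\le i\le n$ (viewing $\Lambda$ as a univariate polynomial in the Hamming weight) and $\Lambda(0)/\Lambda_0 \le 2^{n\operatorname{H}(1/2 - \sqrt{(k/n)(1-k/n)}) + o(n)}$ — this is exactly the function certifying that a binary code of distance $k$ has at most that many codewords. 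Taking the Fourier transform converts these conditions into what we need: set $h(x) = \Lambda_{\mathsf{wt}(x)}$. Then $h\ge 0$ (by $\Lambda_j\ge 0$), $h(0)=\Lambda_0>0$, and using $\sum_{\mathsf{wt}(x)=j}\chi_S(x)=K_j(\mathsf{wt}(S))$ one gets $\hat h(S) = 2^{-n}\Lambda(\mathsf{wt}(S))$, which is $\le 0$ whenever $|S|\ge k$, while $\hat h(\emptyset) = 2^{-n}\Lambda(0)$. Feeding this into the previous step, $\|p\|_2^2 = q(0) \le 2^{-n}\Lambda(0)/\Lambda_0 \le 2^{-n + n\operatorname{H}(1/2 - \sqrt{(k/n)(1-k/n)}) + o(n)}$, and $-\log$ of this is the claimed bound.

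The real work, and the only place the $o(n)$ is incurred, is in establishing the properties of $\Lambda$ with the correct bound, i.e., Navon and Samorodnitsky's argument itself. I would either cite it and verify that the needed sign conditions ($\Lambda_j\ge 0$ for $j\ge 1$, $\Lambda$ nonpositive on weights $\ge k$) hold for their explicit Christoffel–Darboux-type $\Lambda$ built from Krawtchouk polynomials of degree $\approx\theta n$ with $\theta$ tuned so the extremal root sits near $k$, or — since the target statement is naturally the ``dual'' one — rerun their covering/Fourier argument directly on the side of $q$, bypassing the transpose step and perhaps yielding cleaner error terms. One should also check the degenerate regime $k/n\to 1/2$, where $\Lambda$ collapses but the target $n\operatorname{H}(1/2-\sqrt{(k/n)(1-k/n)})\to n\operatorname{H}(0)=0$, so nothing is lost. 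Everything else above is routine Fourier analysis on the cube.
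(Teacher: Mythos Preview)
Your proposal is correct, and the overall thrust---lower-bound $\operatorname{H}(X)$ by $\operatorname{H}_2(X)$ and invoke the MRRW/Navon--Samorodnitsky machinery---matches the paper's philosophy. But the mechanics are genuinely different. The paper does \emph{not} self-convolve the density or use the Delsarte dual certificate $\Lambda$ as a black box. Instead it convolves the density $f$ with the Perron eigenfunction $d$ of a Hamming ball $B_r$ (so $Ad\ge\lambda_r d$), sets $g=f\ast d$, and sandwiches $\langle Ag,g\rangle$: the Fourier side yields $\langle Ag,g\rangle\le n+(n-2k)\E[g^2]$, while $Ad\ge\lambda_r d$ gives $\langle Ag,g\rangle\ge\lambda_r\E[g^2]$. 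Choosing $r$ so that $\lambda_r\ge n-2k+1$ forces $\E[g^2]\le n$, hence $\operatorname{H}_2(Z)\ge n-\log n$ for $Z=X\oplus Y$; finally $\operatorname{H}(X)\ge\operatorname{H}(Z)-\operatorname{H}(Y)\ge n-\log n-n\operatorname{H}(r/n)$, and NS's eigenvalue estimate $\lambda_{B_r}\ge 2\sqrt{r(n-r)}-o(n)$ fixes $r/n\approx\tfrac12-\sqrt{(k/n)(1-k/n)}$.

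Your route trades the convolve-with-$d$ step and the entropy subadditivity $\operatorname{H}(X)\ge\operatorname{H}(Z)-\operatorname{H}(Y)$ for the self-convolution $q$ (which forces $\tilde q(S)=\tilde p(S)^2\ge 0$) together with the explicit dual witness $\Lambda$. This is cleaner in that it bounds $\operatorname{H}_2(X)$ itself rather than $\operatorname{H}_2(Z)$ and then subtracting, and it makes the Delsarte LP structure transparent; the paper's version is more self-contained (it only cites the ball-eigenvalue lemma from NS rather than the full MRRW certificate) and re-runs the covering argument directly on the density. Both incur the same $o(n)$ loss, coming from the same Krawtchouk root asymptotics.
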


Here, for a number $p \in (0,1)$, we say $\operatorname{H}(p)$ to mean $- p \log p - (1-p) \log (1-p)$, i.e., the entropy of a $p$-biased Bernoulli random variable. The case where $k = n/2$ is especially simple, and conveys most of the main idea, so we prove it separately in Section~\ref{sec2}. 

\begin{theorem}
\label{thm:half}
Let $X$ be a joint distribution on unbiased Bernoulli random variables $(X_1, X_2, \ldots, X_n)$ which is $n/2$-wise independent, then $\operatorname{H}(X) \geq n - \log(n+1)$.
\end{theorem}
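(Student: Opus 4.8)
The plan is to reduce the statement, via standard entropy inequalities and symmetry, to a bound on the probability of a single outcome, and then to establish that bound by linear-programming duality on the cube $\{0,1\}^n$.

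For the reduction, note first that $\operatorname{H}(X)\ge \operatorname{H}_\infty(X)=-\log\max_x\Pr(X=x)$, so it suffices to show $\max_x\Pr(X=x)\le (n+1)/2^n$. Replacing $X$ by $X\oplus z$ for a fixed $z\in\{0,1\}^n$ preserves both unbiasedness and $n/2$-wise independence, so taking $z$ to be a mode of $X$ we may assume the maximum is attained at $0^n$, and we are left with proving $\Pr(X=0^n)\le (n+1)/2^n$. Throughout I use the characters $\chi_S(x)=(-1)^{\sum_{i\in S}x_i}$ and the coefficients $\widehat f(S)=2^{-n}\sum_x f(x)\chi_S(x)$; the hypothesis on $X$ is exactly that $\E[\chi_S(X)]=0$ for every $S$ with $1\le|S|\le n/2$.

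Now I would bound $\Pr(X=0^n)$ by exhibiting a dual certificate. Let $f\colon\{0,1\}^n\to\R$ be any function that is nonnegative, has $f(0^n)\ge 1$, and has Fourier support inside $\{S:|S|\le n/2\}$. Then, using $p(x)\ge 0$ and $f\ge\mathbf 1_{\{0^n\}}$ pointwise,
\[
\Pr(X=0^n)\;\le\;\E[f(X)]\;=\;\sum_{|S|\le n/2}\widehat f(S)\,\E[\chi_S(X)]\;=\;\widehat f(\emptyset)\;=\;2^{-n}\sum_x f(x),
\]
so it is enough to produce such an $f$ with $\sum_x f(x)\le n+1$. Since the constraints on $f$ and the objective $\sum_x f(x)$ are invariant under permuting coordinates, we may average $f$ over $S_n$ and assume it is symmetric, i.e.\ $f(x)=\phi(|x|)$ for a univariate polynomial $\phi$ of degree at most $n/2$. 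The problem reduces to the one-dimensional extremal question: find $\phi$ with $\phi(w)\ge 0$ for all $w\in\{0,1,\dots,n\}$, $\phi(0)\ge 1$, and $\sum_{w=0}^n\binom nw\phi(w)\le n+1$.

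Producing $\phi$ is the crux, and is where the coding-theoretic structure enters: this one-dimensional problem is essentially dual to a Plotkin / first-linear-programming bound for binary codes of length $n$ with distance slightly above $n/2$, and for $n=2^r-1$ the extremal configuration is the uniform distribution on the Hamming code (whose dual, the simplex code, has minimum distance $(n+1)/2$, so this distribution is $\lfloor n/2\rfloor$-wise independent and has entropy exactly $n-\log(n+1)$); LP duality then guarantees a matching $\phi$, which can moreover be described explicitly in terms of the code's weight enumerator. The main obstacle is proving the corresponding univariate estimate for general $n$. In particular the naive attempt $\phi(|x|)=\big(\sum_{|T|\le n/4}a_T\chi_T(x)\big)^2$ --- which is automatically nonnegative and of degree $\le n/2$ --- gives only $\sum_x f(x)\ge 2^{\Omega(n)}$ after optimizing the coefficients $a_T$, which is exponentially too weak; one genuinely needs a non-sum-of-squares certificate built from (or imitating) a Krawtchouk polynomial of the kind appearing in the Navon--Samorodnitsky proof of the LP bound.
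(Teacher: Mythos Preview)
Your reduction is correct up to the point where you announce the need for the dual polynomial $\phi$, but you never actually produce it; you yourself flag this as ``the main obstacle'' and note that the natural sum-of-squares attempt fails by an exponential factor. Invoking LP duality together with the Hamming-code example only tells you that \emph{if} the primal optimum equals $(n+1)/2^n$ then some matching $\phi$ exists---it does not establish that optimum, which is precisely the statement to be proved. So as it stands the proposal is a plan with its central step missing, not a proof.

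There is also a strategic misstep that explains why the missing step is hard. You pass through $\operatorname{H}(X)\ge \operatorname{H}_\infty(X)$ and then try to bound $\max_x\Pr(X=x)$. The paper instead uses the weaker comparison $\operatorname{H}(X)\ge \operatorname{H}_2(X)$ and bounds the collision probability $\sum_x p(x)^2$, which turns out to be a one-line Fourier computation. With $f=2^n p$ the normalized density and $A$ the adjacency matrix of the Hamming graph, one has
\[
\langle Af,f\rangle=\sum_{S}(n-2|S|)\,\widehat f(S)^2.
\]
Since $\widehat f(\emptyset)=1$, $\widehat f(S)=0$ for $1\le|S|\le n/2$, and $n-2|S|\le -1$ for $|S|>n/2$, the right side is at most $n+1-\sum_S\widehat f(S)^2=n+1-\E[f^2]$. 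But $\langle Af,f\rangle\ge 0$ trivially because $f\ge 0$ and $A$ has nonnegative entries, so $\E[f^2]\le n+1$, i.e.\ $\sum_x p(x)^2\le(n+1)/2^n$, giving $\operatorname{H}_2(X)\ge n-\log(n+1)$ directly. No separate certificate has to be guessed: the nonnegativity of the quadratic form $\langle Af,f\rangle$ \emph{is} the certificate, and it applies automatically to every density $f$. Your difficulty with finding a non-SOS witness is genuine for the $\operatorname{H}_\infty$ problem you set yourself, but it evaporates once you aim at $\operatorname{H}_2$ instead.
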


Our proof follows Navon and Samorodnitsky's \cite{NS} approach to the the \emph{Linear Programming bound} for error correcting codes (also known as the \emph{MRRW} bound, \cite{MRRW}). This approach uses Fourier analysis and a covering argument. Our main observation is that these techniques essentially prove a lower bound on the \emph{Renyi entropy} of any $k$-wise independent distribution, which then gives us a lower bound for the (Shannon) entropy. 

\section{Preliminaries}

The (basically spectral) argument is stated in the language of Fourier analysis, as in~\cite{NS}. Henceforth, for a random variable $Y = Y(x)$, we say $\E_x[Y(x)]$ (or simply $\E[Y]$) to mean the expected value of $Y$ when $x$ is drawn uniformly from $\{0,1\}^n$. For a function $f:\{0,1\}^n \rightarrow \R$, the \emph{Fourier
decomposition} of $f$ is given by 
$$f(x) = \sum_{S \subseteq [n]}
\widehat f(S) \chi_S(x) ,$$
$\text{ where } \chi_S(x) := 
(-1)^{\mathop{\sum}_{i\in S} x_i} \text{ and }\widehat f(S) :=
\E_{x}[ f(x)\chi_S(x)].$

For any two functions $f,g :\{0,1\} \rightarrow \R$, we also have an inner product, given by

\[
\langle f, g \rangle = \E_x[f(x)g(x)]
\]

\begin{theorem}[Plancherel's identity]
For any $f,g : \{0,1\}^n \rightarrow \R$,
$$\langle f, g\rangle = \sum_{S\subseteq [n]} \widehat{f}(S)\cdot\widehat{g}(S).$$
\end{theorem}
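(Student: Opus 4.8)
The plan is to reduce everything to the orthonormality of the characters $\{\chi_S\}_{S\subseteq[n]}$ under the inner product $\langle\cdot,\cdot\rangle$, after which the identity follows by bilinearity. The first step I would carry out is to compute $\langle \chi_S, \chi_T\rangle$ for arbitrary $S,T\subseteq[n]$. Using $\chi_S(x)\chi_T(x) = (-1)^{\sum_{i\in S}x_i + \sum_{i\in T}x_i}$ and observing that coordinates in $S\cap T$ contribute an even exponent (hence a factor of $1$), one gets $\chi_S(x)\chi_T(x) = \chi_{S\triangle T}(x)$, where $S\triangle T$ is the symmetric difference. Hence $\langle \chi_S,\chi_T\rangle = \E_x[\chi_{S\triangle T}(x)]$.

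The second step evaluates $\E_x[\chi_U(x)]$ for $U\subseteq[n]$. When $U=\emptyset$ the character is identically $1$, so the expectation is $1$. When $U\neq\emptyset$, since $x$ is uniform on $\{0,1\}^n$ its coordinates are independent, so the expectation factors as $\prod_{i\in U}\E_{x_i}[(-1)^{x_i}]$, and each factor equals $\tfrac12(+1)+\tfrac12(-1)=0$, killing the whole product. Thus $\langle\chi_S,\chi_T\rangle = \mathbbm{1}[S=T]$, i.e. the $2^n$ characters form an orthonormal family (and hence, by a dimension count, an orthonormal basis of the space of real functions on $\{0,1\}^n$).

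With orthonormality in hand, the final step is a mechanical expansion. Writing $f=\sum_S\widehat{f}(S)\chi_S$ and $g=\sum_T\widehat{g}(T)\chi_T$ and using bilinearity of the inner product,
\[
\langle f,g\rangle = \sum_{S,T}\widehat{f}(S)\widehat{g}(T)\,\langle\chi_S,\chi_T\rangle = \sum_{S,T}\widehat{f}(S)\widehat{g}(T)\,\mathbbm{1}[S=T] = \sum_{S}\widehat{f}(S)\widehat{g}(S),
\]
which is the claimed identity. A slightly slicker variant, avoiding the expansion of both functions, is to substitute only the Fourier decomposition of $g$ and pull the finite sum out by linearity of expectation: $\langle f,g\rangle = \sum_T\widehat{g}(T)\,\E_x[f(x)\chi_T(x)] = \sum_T\widehat{g}(T)\widehat{f}(T)$, where the last equality is just the definition of $\widehat{f}(T)$.

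I do not expect a genuine obstacle here, as the statement is the standard orthonormality-of-characters computation; the only point warranting care is justifying the Fourier inversion formula $f=\sum_S\widehat{f}(S)\chi_S$ used in the expansion. This is exactly what the orthonormality from the first two steps delivers: since the $\chi_S$ are $2^n$ orthonormal vectors in the $2^n$-dimensional inner product space of functions $\{0,1\}^n\to\R$, they form a basis, and the coefficient of $\chi_S$ in any $f$ equals $\langle f,\chi_S\rangle = \widehat{f}(S)$. If instead the excerpt's statement of the Fourier decomposition is taken as given, then only the orthonormality computation of the first two steps is needed.
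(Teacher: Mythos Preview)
Your proof is correct and is the standard argument via orthonormality of the characters. The paper does not actually prove this theorem; it is stated without proof as a preliminary fact, so there is nothing to compare against.
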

For $f,g : \{0,1\}^n \rightarrow \R$, the convolution of $f$ and $g$ denoted by $f*g$ is defined as:
$$ (f*g)(x) = \E_{y}[f(y)g(y+x)].$$

\begin{fact}
Let $f,g : \{0,1\}^n \rightarrow \R$, then $\widehat{(f*g)}(S) = \widehat{f}(S)\cdot\widehat{g}(S)$ for all $S\subseteq [n]$.
\end{fact}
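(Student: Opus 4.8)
The plan is to verify the identity by a direct computation, unwinding the definitions of the Fourier coefficient and of the convolution and then exploiting the fact that each character $\chi_S$ is a homomorphism from $(\{0,1\}^n, +)$ to $\{+1,-1\}$. Concretely, I would start from
\[
\widehat{(f*g)}(S) = \E_x\big[(f*g)(x)\,\chi_S(x)\big] = \E_x\Big[\E_y\big[f(y)\,g(y+x)\big]\,\chi_S(x)\Big],
\]
and, using linearity of expectation, fold the outer character into the inner average so that the quantity to analyze becomes $\E_{x}\E_{y}\big[f(y)\,g(y+x)\,\chi_S(x)\big]$.

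The key observation is that $\chi_S(x+y) = \chi_S(x)\,\chi_S(y)$ for all $x,y \in \{0,1\}^n$: indeed $\chi_S(x+y) = (-1)^{\sum_{i\in S}(x_i+y_i)}$, and since $(-1)^2 = 1$ this equals $(-1)^{\sum_{i\in S}x_i}(-1)^{\sum_{i\in S}y_i}$ regardless of whether the coordinate sums are read in $\F_2$ or over the integers. In particular, writing $z = y+x$ (so that $x = y+z$ in $\F_2^n$) and using $\chi_S(y)^2 = 1$, we get $\chi_S(x) = \chi_S(y)\,\chi_S(z)$.

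Next I would carry out the change of variables. For each fixed $y$, as $x$ ranges uniformly over $\{0,1\}^n$ the shifted variable $z = y+x$ also ranges uniformly over $\{0,1\}^n$, since translation is a bijection of the group. Substituting and factoring the resulting product expectation yields
\[
\E_{y}\E_{z}\big[f(y)\,g(z)\,\chi_S(y)\,\chi_S(z)\big] = \E_y\big[f(y)\chi_S(y)\big]\cdot\E_z\big[g(z)\chi_S(z)\big] = \widehat{f}(S)\cdot\widehat{g}(S),
\]
which is exactly the claimed identity.

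There is no genuine obstacle here: the statement is a routine consequence of the homomorphism property of the characters, and the only point requiring the slightest care is the substitution $z = y+x$ together with the observation that $\chi_S$ multiplies under addition in $\F_2^n$. As an alternative route, one could expand $f$ and $g$ in the Fourier basis and compute $f*g$ termwise using the orthonormality of $\{\chi_S\}$ (equivalently, Plancherel's identity stated above), but the direct substitution argument is shorter and self-contained, so I would prefer it.
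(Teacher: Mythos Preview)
Your argument is correct and is the standard direct verification of the convolution identity via the homomorphism property of the characters and a change of variables. The paper itself states this result as a \emph{Fact} without supplying any proof, so there is nothing to compare against; your write-up would serve perfectly well as the omitted justification.
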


Next, we define the \emph{R\'{e}nyi entropy}.

\begin{definition}[R\'{e}nyi Entropy]
For a random variable $X$ supported on a finite set $Y$, the \emph{R\'{e}nyi Entropy}, denoted by $\operatorname{H}_2(X)$ is given by:

\[
\operatorname{H}_2(X) = -\log \left(\sum_{y \in Y}p(y)^2 \right),
\]

where $p(y) = \Pr(X = y)$.
\end{definition}

The following is a well known relation between entropy and the R\'{e}nyi entropy:

\begin{fact}
\label{lem:entropy}
For a random variable $X$ of finite support size, $\operatorname{H}(X) \geq \operatorname{H}_2(X)$
\end{fact}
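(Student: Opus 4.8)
The plan is to deduce the inequality from a single application of Jensen's inequality, exploiting the concavity of the logarithm. First I would rewrite the target $\operatorname{H}(X) \geq \operatorname{H}_2(X)$ in the equivalent negated form, namely $\sum_y p(y)\log p(y) \leq \log\left(\sum_y p(y)^2\right)$, where both sums range over the finite support $Y$ of $X$, so that every $p(y)$ appearing is strictly positive and every logarithm is well defined.

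The key observation is that the left-hand side is exactly an expectation. Writing it as $\E_{y}[\log p(y)]$, where $y$ is drawn from the distribution $p$ of $X$, we see that we are averaging the concave function $\log$ evaluated at the random quantity $p(y)$. I would then invoke Jensen's inequality, which for a concave function $\varphi$ gives $\E[\varphi(Z)] \leq \varphi(\E[Z])$; taking $\varphi = \log$ and $Z = p(y)$ yields $\E_{y}[\log p(y)] \leq \log \E_{y}[p(y)]$.

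It remains only to identify the right-hand side: $\E_{y}[p(y)] = \sum_{y} p(y)\cdot p(y) = \sum_y p(y)^2$, which is precisely the collision probability appearing in the definition of $\operatorname{H}_2$. Chaining these steps gives $-\operatorname{H}(X) = \sum_y p(y)\log p(y) \leq \log\left(\sum_y p(y)^2\right) = -\operatorname{H}_2(X)$, and negating both sides yields $\operatorname{H}(X) \geq \operatorname{H}_2(X)$, completing the proof.

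Since the argument is a single invocation of Jensen, there is no substantive obstacle; the only points demanding care are the bookkeeping of signs when passing between $\operatorname{H}$, $\operatorname{H}_2$ and their negations, and the remark that finiteness of the support ensures the expectation is over finitely many strictly positive terms, so that concavity of $\log$ applies without any integrability concern. An alternative route would appeal to the general monotonicity of the R\'enyi entropies $\operatorname{H}_\alpha$ in the order $\alpha$, using $\operatorname{H} = \operatorname{H}_1 \geq \operatorname{H}_2$, but this is heavier machinery than the statement requires.
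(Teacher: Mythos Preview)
Your proof is correct and is essentially identical to the paper's own argument: both apply Jensen's inequality to the concave function $\log$ with the weights $p(y)$ and values $p(y)$, obtaining $\sum_y p(y)\log p(y) \leq \log\bigl(\sum_y p(y)^2\bigr)$ and then negating. The only difference is that you spell out the expectation interpretation and the sign bookkeeping more explicitly than the paper does.
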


\begin{proof}
Let $p_1, \ldots, p_t$ be the (nonzero) probabilities on the support of $X$. Since $\log$ is a concave function, from Jensen's Inequality, we have 

\[
\log \left(\sum_{i \in [t]}p_i^2 \right) \geq \sum p_i \log(p_i),
\]

which proves this fact.
\end{proof}

For this proof, we will also look at the hypercube $\{0,1\}^n$ as a graph.

\begin{definition}[Hamming graph]
The hamming graph $H_n = (V_n, E_n)$ is a graph with vertex set $V_n = \{0,1\}^n$, and edges $\{x,y\} \in E_n$ if $x$ and $y$ differ on exactly one coordinate.
\end{definition}

\section{Entropy of $n/2$-wise independent distributions}
\label{sec2}

Here, we give the proof of Theorem~\ref{thm:half}. For a random variable $X$, we define a function $f (= f_{X}) : \{0,1\}^n \rightarrow \mathbb{R}_{\geq 0}$ to be the \emph{normalized probability density function}, i.e., 

\[
f(x) = 2^n \cdot \Pr(X = x).
\]

So, we have $\E[f] = 1$.

\begin{proof}[Proof of Theorem~\ref{thm:half}]
Let $f:\{0,1\}^n \rightarrow \R_{\geq 0}$ be the normalized probability density function of an $n/2$-wise independent distribution of Bernoulli random variables $X$. Let $A$ denote the adjacency matrix of the Hamming graph $H_n$.\\

Let $L: \{0,1\}^n \rightarrow \R$ such that $L(x) = 1$ iff $|x| = 1$ and $0$ otherwise. First, we observe that for any function $f$, $Af = L \ast f$. Also $\widehat{L}(S) = n - 2|S|$.  We have,
\begin{align*}
\langle Af, f \rangle & = \langle L \ast f, f \rangle \\
& = \sum_{S \subseteq [n]}(\widehat{L \ast f})(S) \cdot \widehat{f}(S) \tag*{(By Plancherel's identity)}\\
& = \sum_{S \subseteq [n]}\widehat{L}(S) \cdot \widehat{f}(S)^2 \\
& = \widehat{L}(\emptyset)\widehat{f}(\emptyset)^2 + \sum_{\substack{S \subseteq [n], \\ 1\leq |S|\leq n/2}}\widehat{L}(S) \cdot \widehat{f}(S)^2 +\sum_{\substack{S \subseteq [n],\\  |S|> n/2}}\widehat{L}(S) \cdot \widehat{f}(S)^2.
\end{align*}
We now use that fact that $f$ is a normalized pdf of $n/2$-wise independent distribution and hence $\widehat{f}(S)=0$ for all $1\leq |S|\leq n/2$. Thus, we can upper bound $\langle Af, f \rangle$ as
\begin{align*}
\langle Af, f \rangle &= n\widehat{f}(\emptyset)^2 + 0 +  \sum_{\substack{S \subseteq [n],\\  |S|> n/2}}\widehat{L}(S) \cdot \widehat{f}(S)^2  \\
& \leq  n\widehat{f}(\emptyset)^2 -  \sum_{\substack{S \subseteq [n],\\  |S|> n/2}} \widehat{f}(S)^2  \\
 & = n \widehat{f}(\emptyset)^2 + 1 - \sum_{S \subseteq [n]}\widehat{f}(S)^2 \\
 & = n + 1 - \E[f^2].
\end{align*}
Since $\langle Af, f \rangle \geq 0$, we have $\E[f^2] \leq n + 1$. Let $p_1, p_2, \ldots, p_t$ be the set of nonzero probabilities on the support of the distribution. We have that $\operatorname{H}_2(X) = -\log (\sum p_i^2) \leq n - \log(n+1)$. By Fact~\ref{lem:entropy}, we have $\operatorname{H}(X) \geq n - \log(n+1)$.

\end{proof}

\emph{Remark:} The bound obtained above is tight when $n + 1$ is a power of $2$. In the usual way, we identify $\{0,1\}^n$ with $\F_2^n$. The tight case is constructed from the \emph{Hadamard code}. Let $P$ be the parity check matrix of the Hadamard code, so $Pv = 0$ for codewords $v$. It can be checked that the uniform distribution on the row space of $P$ is $n/2$-wise independent. Since this a uniform distribution on $\frac{2^n}{n+1}$ points, we have the required bound.

\section{Entropy of $k$-wise independent distributions where $k = \Theta(n)$}

We carry over the notation from the previous section. 
For a subset $B\subseteq \{0,1\}^n$, define $\lambda_B$ as
$$ \lambda_{B} = \max \left\{\frac{\langle Af, f\rangle}{\langle f, f \rangle} \bigg| f: \{0,1\}^n \rightarrow \R, \mathtt{supp}(f)\subseteq B \right\}.$$\\

For general $k-1$-wise independent balanced Bernoulli distributions where $k = \Theta(n)$, we have the following approach: The main idea is that given a $k-1$-wise independent distribution $X$ given by the density function $f$, we make another random variable $Z$, given by density function $g$ as follows: sample a point, according to $f$, and perturb it randomly to some nearby point according to some distribution. Then use an argument similar to the previous section on this new distribution. Formally, let $Y$ be a random variable that is supported on the hamming ball of radius $r$ with center $0^n$. We have a new random variable $Z = X \oplus Y$. 

There are three useful facts about this distribution on $Z$, as follows:

\begin{fact}
\label{fact:ballr}

\begin{enumerate}
\item[(a)] The resulting distribution $Z$ is also $(k-1)$-wise independent.
\item[(b)] $\operatorname{H}(X) + \operatorname{H}(Y) \geq \operatorname{H}(Z)$.
\item[(c)] $g = f\ast d$.
\end{enumerate}
where $d: \{0,1\}^n \rightarrow \R$ is normalized density function of $Y$ supported on the hamming ball of radius $r$ around the origin.
\end{fact}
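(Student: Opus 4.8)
The three claims are essentially bookkeeping once we unpack the definitions, so the plan is to handle each in turn, with (a) being the only one that requires a genuine (though short) argument.

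For part (c), I would simply compute the density of a sum of independent random variables on $\{0,1\}^n$. If $X$ has normalized density $f$ and $Y$ has normalized density $d$, then for $Z = X \oplus Y$,
\[
\Pr(Z = x) = \sum_{y} \Pr(X = x \oplus y)\Pr(Y = y),
\]
and multiplying through by $2^n$ and writing $\Pr(X = x\oplus y) = 2^{-n} f(x\oplus y)$, $\Pr(Y=y) = 2^{-n} d(y)$ gives $g(x) = 2^{-n}\sum_y f(x\oplus y) d(y) = \E_y[f(x\oplus y)d(y)] = (f*d)(x)$, using that $x\oplus y = y \oplus x$ and the definition of convolution. This is immediate.

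For part (b), I would invoke two standard entropy inequalities: first, since $X$ and $Y$ are independent, $\operatorname{H}(X,Y) = \operatorname{H}(X) + \operatorname{H}(Y)$; second, $Z = X\oplus Y$ is a (deterministic) function of the pair $(X,Y)$, so $\operatorname{H}(Z) \leq \operatorname{H}(X,Y)$ by the data-processing/subadditivity property of Shannon entropy. Chaining these gives $\operatorname{H}(X) + \operatorname{H}(Y) \geq \operatorname{H}(Z)$. Part (a) is the one worth a sentence of real content: fix $S\subseteq[n]$ with $|S|\leq k-1$ and a string $a\in\{0,1\}^{|S|}$. Then $Z|_S = X|_S \oplus Y|_S$, and conditioning on the value of $Y|_S = b$, the distribution of $X|_S$ is uniform on $\{0,1\}^{|S|}$ by $(k-1)$-wise independence of $X$; hence $X|_S \oplus b$ is uniform as well, and averaging over $b$ keeps it uniform. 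So $\Pr(Z|_S = a) = 2^{-|S|}$, which is exactly $(k-1)$-wise independence of $Z$. Equivalently, on the Fourier side, $\widehat{g}(S) = \widehat{f}(S)\widehat{d}(S)$ by part (c) and the convolution fact, so $\widehat{f}(S) = 0$ for $1\leq |S|\leq k-1$ forces $\widehat{g}(S)=0$ there too — this is perhaps the cleanest way to phrase it given the machinery already set up.

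I do not anticipate a real obstacle here: the only place to be slightly careful is that all three statements must be about the \emph{same} choice of $r$ and the \emph{same} distribution $Y$ (so that (b) gives a usable bound only if $\operatorname{H}(Y)$ is controlled — but that is a concern for how the fact is \emph{used}, not for its proof). I would present (c) first since (a) can then be read off from it via the convolution identity, then (b), keeping the whole thing to a few lines.
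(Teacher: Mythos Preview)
Your proposal is correct and matches the paper's proof essentially line for line: the paper proves (c) by direct computation of the density of $X\oplus Y$, proves (a) via the Fourier identity $\widehat{g}(S)=\widehat{f}(S)\widehat{d}(S)$ (your ``cleanest way''), and proves (b) by the one-line observation that $X$ and $Y$ together determine $Z$. Your suggestion to present (c) before (a) is in fact a slight improvement, since the paper's proof of (a) tacitly relies on (c).
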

\begin{proof}
The proof of \emph{(a)} is that since $\hat{g}(S) = \hat{f}(S)\cdot \hat{d}(S)$, it implies that $\hat{g}(S) = 0$ for all $0<|S|\leq (k-1)$.

Item \emph{(b)} is true because $X$ and $Y$ collectively determine $Z$.

To prove \emph{(c)}, let $\mathtt{wt}(x)$ denotes the hamming weight of $x \in \{0,1\}^n$. By definition, we have
\begin{align*}
(f\ast d)(x) &= \frac{1}{2^n}\sum_{y\in \{0,1\}^n} f(y) d(x+y)\\
& = 2^n \sum_{y : \mathtt{wt}(x+y)\leq r} \Pr(X = y) \cdot \Pr(Y = x+y) \\
& = 2^n \Pr(Z=x)\\
& = g(x)
\end{align*}

\end{proof}

Next, we make use of the following lemma from~\cite{NS} to obtain bounds on the maximum eigenvalue of the (Hamming) graph induced on a Hamming ball:

\begin{lemma}
\label{lem:eigenvalue}
Let $B_r$ be a Hamming ball of radius $r$, then we have:
\[
\lambda_{B_r} \geq 2\sqrt{r(n-r)} - o(n)
\]
\end{lemma}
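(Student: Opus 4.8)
The plan is to exhibit an explicit test function $f$ supported on the Hamming ball $B_r$ whose Rayleigh quotient $\langle Af,f\rangle/\langle f,f\rangle$ is at least $2\sqrt{r(n-r)}-o(n)$. The natural candidate, following Navon and Samorodnitsky, is a localized version of the top eigenfunction of the whole Hamming graph. Recall that $A$ on $\{0,1\}^n$ has eigenvalues $n-2|S|$ with eigenfunctions $\chi_S$; a single character is not supported on a ball, so instead I would take the character $\chi_{[n]}$ (eigenvalue $-n$) or, better, build a real-valued function from the level sets of the cube whose Fourier mass is concentrated on a weight $w \approx n/2 - \sqrt{r(n-r)}$ layer, which is the layer achieving eigenvalue $2\sqrt{r(n-r)}$. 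The cleanest route is to set $f(x) = \prod_{i=1}^n (1 + \beta \cdot \varepsilon_i(x))$ truncated to $B_r$, where $\varepsilon_i(x) = (-1)^{x_i}$ and $\beta$ is a parameter tuned so that the bulk of the product's mass sits on low-weight strings; one checks $Af$ for the untruncated product equals $\big(\sum_i \text{(shift in coordinate }i)\big)$ applied to a product measure, giving a Rayleigh quotient of exactly $n \cdot \frac{2\beta}{1+\beta^2}$ against the product distribution.

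The key steps, in order: (1) fix the product-form function $f_\beta(x) = \prod_i (1+\beta\varepsilon_i(x))$ and compute exactly that $\langle A f_\beta, f_\beta\rangle / \langle f_\beta, f_\beta\rangle = n\cdot\frac{2\beta}{1+\beta^2}$ using the tensor structure (each coordinate contributes a $2\times 2$ block with Rayleigh quotient $\frac{2\beta}{1+\beta^2}$, and both $A$ and the squared-norm tensorize). (2) Observe that $f_\beta$ is a nonnegative function whose $\ell_2^2$ mass, under the associated product measure, concentrates on strings of Hamming weight close to $\mu := n\cdot\frac{1}{1+\beta^2}$ — by a Chernoff/Hoeffding bound the weight is within $O(\sqrt{n\log n})$ of $\mu$ with all but a $2^{-\omega(\log n)}$ fraction of the mass. (3) Choose $\beta$ so that $\mu + O(\sqrt{n\log n}) \le r$, i.e. $\beta$ slightly larger than $\sqrt{(n-r)/r}$; with this choice $\frac{2\beta}{1+\beta^2} = 2\cdot\frac{\sqrt{r(n-r)}}{n} - o(1)$, so the untruncated Rayleigh quotient is $2\sqrt{r(n-r)} - o(n)$. (4) Truncate: let $\tilde f = f_\beta \cdot \mathbbm{1}_{B_r}$. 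Since $A$ has bounded operator norm $n$ and the $\ell_2$ mass outside $B_r$ is a negligible $2^{-\omega(\log n)}$ fraction, $\langle A\tilde f,\tilde f\rangle \ge \langle Af_\beta, f_\beta\rangle - n\cdot 2^{-\omega(\log n)}\|f_\beta\|_2^2$ and $\|\tilde f\|_2^2 \ge (1-o(1))\|f_\beta\|_2^2$, so the Rayleigh quotient of $\tilde f$ over $B_r$ is still $2\sqrt{r(n-r)} - o(n)$, which by definition lower-bounds $\lambda_{B_r}$.

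The main obstacle I expect is step (4), controlling the truncation cleanly: one has to be careful that chopping off the low-probability tail does not simultaneously destroy the numerator $\langle Af,f\rangle$ (the operator $A$ mixes weight-$w$ strings with weight-$(w\pm 1)$ strings, so mass sitting exactly on the boundary sphere $|x| = r$ could in principle leak). The fix is to pick the truncation radius with a margin — truncate at $B_r$ but tune $\beta$ so the concentration window $[\mu - C\sqrt{n\log n}, \mu + C\sqrt{n\log n}]$ sits strictly inside $B_{r}$ with room to spare — so that essentially all edges contributing to $\langle Af,f\rangle$ have both endpoints well inside the ball. A secondary subtlety is that $f_\beta$ is defined via a product measure that is not uniform, whereas $\lambda_{B_r}$ is defined with the uniform inner product $\E_x[\cdot]$; but since $f_\beta$ is an explicit nonnegative function on $\{0,1\}^n$ one simply works directly with $\langle A f_\beta, f_\beta\rangle = \E_x[f_\beta \cdot A f_\beta]$ and $\|f_\beta\|_2^2 = \E_x[f_\beta^2]$ in the uniform measure, and the concentration statement becomes a statement about where the function $f_\beta^2/\E[f_\beta^2]$ — a genuine probability density on the cube, in fact the density of $n$ i.i.d. bits with bias $\beta^2/(1+\beta^2)$ — puts its mass, which is again Chernoff. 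Modulo these bookkeeping points the argument is the standard localization trick, and I would present it with the product-function computation front and center.
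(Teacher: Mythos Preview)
The paper itself omits this proof, deferring entirely to \cite{NS}; your plan --- exhibit a product test function whose $\ell_2$ mass concentrates near weight $r$, then truncate to $B_r$ --- is correct and is indeed the Navon--Samorodnitsky argument in spirit.

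However, your explicit formulas are off. For $f_\beta(x)=\prod_i(1+\beta\varepsilon_i(x))$ one has $\widehat{f_\beta}(S)=\beta^{|S|}$, so
\[
\frac{\langle Af_\beta,f_\beta\rangle}{\langle f_\beta,f_\beta\rangle}
=\frac{\sum_S(n-2|S|)\,\beta^{2|S|}}{\sum_S\beta^{2|S|}}
=n\cdot\frac{1-\beta^2}{1+\beta^2},
\]
not $n\cdot\frac{2\beta}{1+\beta^2}$; and $f_\beta^2/\E[f_\beta^2]$ is the i.i.d.\ law with $\Pr[x_i{=}1]=\frac{(1-\beta)^2}{2(1+\beta^2)}$, so the mean weight is $\mu=\frac{n}{2}\bigl(1-\frac{2\beta}{1+\beta^2}\bigr)$, not $n/(1+\beta^2)$. (Your proposed value $\beta\approx\sqrt{(n-r)/r}$ exceeds $1$ for $r<n/2$, which would make $f_\beta$ signed --- a symptom of the same slip.) With the corrected expressions the two constraints mesh perfectly: setting $\frac{1-\beta^2}{1+\beta^2}=\frac{2\sqrt{r(n-r)}}{n}$ forces $\frac{2\beta}{1+\beta^2}=\frac{n-2r}{n}$ (since the squares of these two ratios sum to $1$), hence $\mu=r$ on the nose; nudging $\beta$ upward by $o(1)$ gives the margin you need for the Chernoff-plus-truncation step, which then goes through exactly as you describe.
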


We omit the proof of the above lemma since we are going to use it exactly as is presented in~\cite{NS}. Now we can choose the distribution $d$ as the normalized eigenfunction of the hamming ball, i.e., the function for which:

\begin{equation}
\label{eq:def_d}
\frac{\langle Ad, d\rangle}{\langle d,d \rangle} = \lambda_{B_r}
\end{equation}
Further, we have that $d$ is a nonnegative function, with $\E[d] = 1$, and $Ad \geq \lambda_{B_r} d$, and $d$ is only supported on the Hamming ball of radius $r$. Denote $\lambda_r = \lambda_{B_r}$ for convenience.

Now, we are ready to give the proof of Theorem~\ref{thm:main}
\begin{proof}[Proof of Theorem~\ref{thm:main}]
Let $f$ be the normalized probability density function of a $k-1$-wise independent distribution. Let $g = f \ast d$ where $d$ satisfies Equation~\ref{eq:def_d}. The thing to note is that for $S \subseteq [n]$, since $\widehat{g}(S) = \widehat{f}(S)\widehat{d}(S)$, we have that $\widehat{g}(S) = 0$ for $0<|S| < k$. Again, we look at the quantity $\langle Ag, g \rangle$:

\begin{align}
\langle Ag, g \rangle & = \langle L \ast g, g \rangle \nonumber\\
& = \sum_{S \subseteq [n]}(\widehat{L \ast g})(S) \cdot \widehat{g}(S) \nonumber\\
& = \sum_{S \subseteq [n]}\widehat{L}(S) \widehat{g}^2(S) \nonumber\\
& \leq n \cdot\widehat{g}^2(\emptyset) + 1 + (n - 2k)\sum_{S \subseteq [n]}\widehat{g}^2(S) \nonumber\\
& = n + (n-2k)\E[g^2]. \label{eq:ub}
\end{align}
On the other hand, we have:
\begin{align}
\langle Ag, g \rangle & = \langle L \ast (d \ast f) , g \rangle \nonumber\\
& = \langle (L \ast d) \ast f , g \rangle \nonumber\\
	& \geq \langle (\lambda_r d) \ast f , g \rangle \nonumber\\
    & = \lambda_r \langle d \ast f , g \rangle \nonumber\\
	& = \lambda_r \langle g,g \rangle \nonumber\\
	& = \lambda_r \E[g^2]. \label{eq:lb}
\end{align}
Combining (\ref{eq:ub}) and (\ref{eq:lb}), we have,
\[
(\lambda_r - (n - 2k))\E[g^2] \leq n.
\]

We choose $r$ such that $\lambda_r \geq n - 2k + 1$, this gives us the upper bound $\E[g^2] \leq n$. Using Fact~\ref{lem:entropy} and the definition of $g$, we get
\begin{align*}
\operatorname{H}(Z) & \geq \operatorname{H}_2(Z)\\
& = - \log\left(\frac{\E[g^2]}{2^n}\right)\\
& \geq - \log\left(\frac{n}{2^n}\right) = n - \log n.
\end{align*}

By Fact~\ref{fact:ballr} (2), $\operatorname{H}[Y] + \operatorname{H}[X] \geq \operatorname{H}[Z] \geq n - \log n$, giving us $\operatorname{H}[X] \geq n - \log n - \operatorname{H}[Y]$.  Since $Y$ is supported on the hamming ball of radius $r$, we just use the trivial bound $\operatorname{H}(Y) \leq \log\left({n \choose r}\right)$. Now, using the well-known upper bound $\sum_{i=0}^{r}{n \choose i} \leq 2^{nH(r/n)}$, we get $H(Y) \leq  n \operatorname{H}\left(\frac{r}{n} \right)$ and hence

$$\operatorname{H}(X) \geq n-n\operatorname{H}\left(\frac{r}{n} \right) - \log n.$$

The value $r$ for our purpose is $\frac{n}{2} - \sqrt{k(n-k)} + o(n)$ satisfying $\lambda_r \geq n - 2k + 1$ which, by Lemma~\ref{lem:eigenvalue}, completes the proof.
\end{proof}

Since the best known size lower bound goes by proving a lower bound on the $\ell^2$ norm, it easily extends to entropy, which, by Jensen's Inequality, is shown to be a `weaker' quantity.

\section{Acknowledgements}

We would like to thank Swastik Kopparty for the many very helpful discussions, and suggestions for the writeup.



\begin{thebibliography}{9}
\bibitem{AS}
N. Alon and J. H. Spencer, \emph{The Probabilistic Method}, Third Edition, Wiley, 2010.
\bibitem{Bab}
L. Babai,
\emph{Entropy Versus Pairwise Independence} (Preliminary version)
{http://people.cs.uchicago.edu/$\sim$laci/papers/13augEntropy.pdf}, 2013.
\bibitem{GP}
D. Gavinsky and P. Pudl{\'a}k,
\emph{On the joint entropy of $ d $-wise-independent variables},
Commentationes Mathematicae Universitatis Carolinae, 57, 3,   pages 333--343,2016
\bibitem{MRRW}
R. McEliece, E. Rodemich, H. Rumsey and L. Welch,
\emph{New upper bounds on the rate of a code via the Delsarte-MacWilliams inequalities},
IEEE Transactions on Information Theory, vol. 23, no. 2, pp. 157-166, Mar 1977.
\bibitem{NS}
M. Navon, A. Samorodnitsky, 
\emph{Linear programming bounds for codes via a covering argument},
Discrete and Computational Geometry, 41, 2, 2009.
\end{thebibliography}


\end{document}